\newtheorem{theorem}{Theorem}
\def\rank{\operatorname{rank}}
\newcommand{\Oo}{\mathcal O} 
\newcommand{\Patrascu}{P{\v a}tra{\c s}cu} 
\newtheorem{lemma}[theorem]{Lemma}
\newtheorem{corollary}[theorem]{Corollary}
\title{Faster Fully-Dynamic Minimum Spanning Forest}
\author{Jacob Holm\footnote{\texttt{jh@poplar.dk}}}
\author{Eva Rotenberg\footnote{\texttt{roden@diku.dk}}}
\author{Christian Wulff-Nilsen\footnote{\texttt{koolooz@di.ku.dk},
                  \texttt{http://www.diku.dk/$_{\widetilde{~}}$koolooz/}.}}
\affil{Department of Computer Science, University of Copenhagen}
\date{}
\begin{document}
\setcounter{page}{0}

\maketitle
\begin{abstract}
We give a new data structure for the fully-dynamic minimum spanning forest problem in simple graphs. Edge updates are supported in $\Oo(\log^4n/\log\log n)$ amortized time per operation, improving the $\Oo(\log^4n)$ amortized bound of Holm et al. (STOC~'$98$, JACM~'$01$). We assume the Word-RAM model with standard instructions.
\end{abstract}

\section{Introduction}\label{sec:Intro}
A dynamic graph problem is that of maintaining a dynamic graph on $n$ vertices where edges may be inserted or deleted and possibly where queries regarding properties of the graph are supported. We call the dynamic problem decremental resp.~incremental if edges can only be deleted resp.~inserted, and fully dynamic if both edge insertions and deletions are supported.

We consider the fully-dynamic minimum spanning forest (MSF) problem which is to maintain a state for each edge of whether it belongs to the current MSF or not. After an edge update, at most one edge becomes a new tree edge in the MSF and at most one edge becomes a non-tree edge and a data structure needs to output which edge changes state, if any.


Dynamic MSF was first studied by Frederickson~\cite{frederickson85connect} who achieved a worst-case update time of $\Oo(\sqrt{m})$ where $m$ is the number of edges at the time of the update. This was later improved by Eppstein et al.~\cite{eppstein97sparsification} to $\Oo(\sqrt{n})$ using the sparsification technique. Henzinger and King made a data structure with amortized update time $\Oo(\sqrt[3]{n}\log n)$. Holm et al.~\cite{holm01connect} dramatically improved this amortized bound to $\Oo(\log^4 n)$.  All these bounds are for simple graphs (no parallel edges), but any MSF structure can be extended to general graphs via a simple reduction that adds $\Oo(\log m)$ to the update time.  In the following we will assume all graphs are simple unless otherwise stated.

We show how to support updates in $\Oo(\log^4n/\log\log n)$ amortized time, improving the bound by Holm et al. To obtain this bound, we assume the RAM model of computation with standard instructions.
More generally, our time bound per update can be written as
\[
  \Oo\left(\frac{\log^4 n}{\log\log n}\cdot \frac{\mathit{sort}(\log^{c}n,n^2)}{\log^{c}n}\right),
\]
for some constant $c > 0$, where $\mathit{sort}(k,r)$ is the time for sorting $k$ natural numbers with values in the range from $0$ to $r$. Equivalenty, $\mathit{sort}(k,r)/k$ is the operation time of a priority queue. Thus, the update time of our structure depends on the model of computation, and the choice of the priority queue that our structure uses as a building block. The following table shows both deterministic and randomized variants of the data structure differing only in the choice of priority queue.\\
\begin{center}
\begin{table}[ht!]
\begin{tabular}{|l|l|l|}
\hline
 & Deterministic & Randomized \\
\hline
RAM w. AC$^0$ & $\Oo(\log ^4 n \sqrt{\log \log \log n}/\sqrt{\log \log n})$ & $\Oo(\log^4 n \log \log \log n/\log \log n)$ \\
\hline
RAM, AC$^0$, $\Oo(1)$ multiplication & $\Oo(\log^4 n \log \log \log n/\log \log n)$ & $\Oo(\log ^4 n/\log \log n)$ \\
\hline
\end{tabular}
\caption{Our update time, depending on the choice of priority queue from \cite{Raman96fastalgorithms,thorup02sortAC0,han04detsort,andersson98sorting}, see Section \ref{relwork}}
\end{table}
\end{center}

\subsection{Related Work}\label{relwork}
Holm et al.~\cite{holm01connect} gave a deterministic data structure for decremental MSF with $\Oo(\log^2 n)$ amortized update time. Combining this with a slightly modified version of a reduction from fully-dynamic to decremental MSF of Henzinger and King~\cite{Henzinger97fullydynamic}, they obtained their $\Oo(\log^4n)$ bound for fully-dynamic MSF. A somewhat related problem to dynamic MSF is fully-dynamic connectivity. Here
a data structure needs to support insertion and deletion of edges as well as connectivity queries between vertex pairs. The problem was first studied by Frederickson~\cite{frederickson85connect} who obtained $\Oo(\sqrt{m})$ update time $\Oo(1)$ query time data structure. Update time was improved to $\Oo(\sqrt n)$ by Eppstein et al.~\cite{eppstein97sparsification}. Henzinger and King~\cite{henzinger99connect} obtained expected $\Oo(\log^3n)$ amortized update time and query time $\Oo(\log n / \log \log n)$. Henzinger and Thorup~\cite{henzinger97sampling} improved update time to $\Oo(\log^2n)$ with a clever sampling technique. A deterministic structure with the same bounds was given by Holm et al.~\cite{holm01connect}. Thorup~\cite{thorup00connect} achieved an expected amortized update-time of $\Oo(\log n (\log\log n)^3)$ and query time $\Oo(\log n/\log\log\log n)$, using randomization. Wulff-Nilsen~\cite{Wulff-Nilsen13a} gave a deterministic, amortized $\Oo(\log ^2 n/\log\log n)$ update-time data structure with $\Oo(\log n /\log\log n)$ query time. An $\Omega(\log n)$ lower bound on the operation time for fully-dynamic connectivity and MSF was given by \Patrascu\ and Demaine~\cite{patrascu04connect}.
 

As indicated above, priority queues are essential to our data structure. Equivalently, we rely on the ability to efficiently sort $l= \log^cn$ elements from $[n^2]$ where $c$ is a constant. Expressed as a function of $l$, the elements lie in the range $0\ldots 2^{w}-1$, where $w = 2 l^{1/c}$. To sort quickly, we rely on $w>l$.
In the RAM-model with AC$^0$ instructions, Raman~\cite{Raman96fastalgorithms} gave a deterministic bound of $\Oo(l\sqrt{\log l \log \log l})$. Using randomization, Thorup~\cite{thorup02sortAC0} improved this to $\Oo(l\log \log l)$. The same time bounds were achieved without randomization, if assuming constant time multiplication, by Han~\cite{han04detsort}. Andersson et al.~\cite{andersson98sorting} achieve optimal $\Oo(l)$ sorting time, using randomization, and assuming $\Oo(1)$ time multiplication; their algorithm requires $w\gg \log^{2+\varepsilon} l$ for some constant $\varepsilon$, which in our case is satisfied as $w > l^{1/c}$.


\subsection{Idea and paper outline}
Since the data structures of Holm et al.~\cite{holm01connect} for decremental MSF and fully dynamic connectivity are essentially the same, the question arises of whether the $\Oo(\log ^2 n/\log \log n)$ fully-dynamic connectivity structure in~\cite{Wulff-Nilsen13a} can be directly translated to an improved $\Oo(\log ^2 n /\log \log n)$ decremental MSF structure. If that were the case, we could immediately use the reduction from fully-dynamic to decremental MSF in~\cite{holm01connect} to obtain an $\Oo(\log ^4 n / \log \log n)$ bound for fully-dynamic MSF. Unfortunately, that is not the case as the data structure in~\cite{Wulff-Nilsen13a} relies on a shortcutting system which can not be easily adapted to decremental MSF.
Instead, we make a different analysis of the reduction from decremental to fully dynamic MSF (Section~\ref{sec:Reduction}) which surprisingly shows how a slightly slower decremental MSF structure than that in~\cite{holm01connect} can in fact lead to a slightly faster fully dynamic MSF!

A modified version of the dynamic connectivity structure by Wulff-Nilsen ~\cite{Wulff-Nilsen13a} with $\Oo(\log^2n)$ update time is described in Section~\ref{sec:Connectivity}. It is shown in Section~\ref{subsec:DecMST} how to modify it to a simple decremental MSF structure with the same performance. We then show how to speed up a certain part of this decremental MSF structure in Section~\ref{sec:FastMSF}. The main idea is to extend it with a non-trivial shortcutting system involving fast priority queues in order to speed up the search for replacement edges. This system is the main technical contribution of the paper. We conclude Section~\ref{sec:FastMSF} by showing that this data structure for decremental MSF speeds up fully-dynamic MSF.



\section{Reduction to decremental MSF}\label{sec:Reduction}
In this section, we present a different analysis of the reduction from decremental MSF to fully dynamic MSF from~\cite{holm01connect} based on the construction from~\cite{Henzinger97fullydynamic}. The main difference is that in our analysis, we do not insist on all edges being deleted in the decremental MSF problem.
\begin{lemma}\label{Lem:Reduction}
Suppose we have a decremental (deletions-only) MSF data structure that for a connected simple graph with $n$ vertices and $m$ edges has a total worst-case running time for the construction and the first $d$ deletions of $\Oo(t_cm + t_rd)$, where $t_c$ and $t_r$ are non-decreasing functions of $n$.
Then there exists a fully dynamic MSF data structure for simple graphs on $n$ vertices with amortized update time $\Oo( \log^3 n + t_c \log ^2 n + t_r \log n )$.
\end{lemma}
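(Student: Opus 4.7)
The plan is to follow the Henzinger--King style hierarchical reduction from Holm et al., and to re-examine its charging so as to keep $t_c$ and $t_r$ separate. I would maintain $k=\Oo(\log n)$ levels with edge sets $E_0,\ldots,E_k$ of sizes $|E_i|\le 2^i$, and on each $E_i$ a decremental MSF instance $D_i$ supplied by the lemma's hypothesis. The global MSF $F$ is kept explicitly in a top tree, supporting $\mathrm{link}$, $\mathrm{cut}$, and path-max queries in $\Oo(\log n)$ time; the invariant to preserve is that $F$ equals the MSF of $\bigcup_i \mathrm{MSF}(D_i)$.

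On insertion of $e$, locate the smallest $i$ with $1+\sum_{j<i}|E_j|\le 2^i$, discard $D_0,\ldots,D_{i-1}$, rebuild a fresh $D_i$ on $\{e\}\cup\bigcup_{j<i}E_j$, and reconcile $F$ to the new local trees with $\Oo(2^i)$ top-tree operations. On deletion of $e$, delete from the level $i$ containing it; if $e\in F$, query every $D_j$ for the lightest local replacement across the cut $F\setminus\{e\}$, take the overall minimum $f$, and swap it into $F$. An MSF-specific subtlety is that after swapping, $f$ must be reintroduced into the hierarchy at level $0$ so that the local invariants of the $D_j$'s continue to hold; this can itself trigger a cascade of rebuilds exactly as in an insertion.

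The analysis splits into three pieces. (i) Construction: a Henzinger--King potential gives $\Oo(\log n)$ rebuilds per edge over its lifetime, so vanilla insertions cost $\Oo(t_c\log n)$ amortized; the cascading re-insertion of replacement edges adds an extra $\Oo(\log n)$ factor, for $\Oo(t_c\log^2 n)$ total. (ii) Deletion: each operation issues at most one physical delete per level into the $D_i$'s, contributing $\Oo(t_r\log n)$. (iii) Top-tree overhead: $\Oo(\log^2 n)$ elementary updates at $\Oo(\log n)$ each give $\Oo(\log^3 n)$. Summing yields the stated bound.

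The main obstacle will be the refined accounting needed to keep $t_r$ multiplied by only $\log n$ rather than $\log^2 n$. Concretely, I must show that in each $D_i$ the number of deletions $d_i$ executed before $D_i$ is discarded satisfies that $\sum_i d_i$ charged to any single original operation is $\Oo(\log n)$. This uses precisely the freedom that $d$ need not equal $m$ in the decremental guarantee: the analysis does not have to pretend that all surviving edges of $D_i$ are eventually deleted before its scheduled rebuild, which is the point the paper highlights and on which the eventual speed-up will rest.
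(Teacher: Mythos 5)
Your overall architecture matches the paper's: a logarithmic hierarchy of decremental structures with geometric size caps, a top tree over the global forest $F$, collapse on insertion, and a cross-level minimum on deletion. The credit arithmetic you sketch also lands on the right bound. But the deletion step as you describe it is incorrect, and this is not a minor detail — it is exactly the MSF-specific part that makes the reduction work.

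You write that after swapping in the winning replacement $f$, ``$f$ must be reintroduced into the hierarchy at level $0$,'' and that this is the source of the cascading rebuilds. That is backwards. When you delete a tree edge $e\in F$, you query \emph{every} $D_j$ that contains $e$ (or a super-edge through $e$), and each one returns a candidate $f_j$ which, inside $D_j$, now becomes a local tree edge. Only the globally cheapest of these, $f$, joins $F$; the other $f_j$ remain non-tree edges of $G$, yet each has just become a tree edge in the one $D_j$ that used to witness it as a non-tree edge. This breaks the invariant that every non-tree edge of $G$ appears as a non-tree edge in some level, and a later deletion could then miss $f_j$ as a legal replacement and return a heavier edge, i.e., you would output a wrong MSF. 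The fix is the opposite of what you propose: $f$ itself needs no special treatment (it is now a tree edge of $F$, which is fine everywhere), but all of the up to $\Oo(\log n)$ \emph{losing} candidates must be fed back as a batch $D$ into a fresh collapse. That batch of size $\Oo(\log n)$, each reinsertion costing $\Oo(\log^2 n + t_c\log n)$ in credits, is precisely what produces the $\Oo(\log^3 n + t_c\log^2 n)$ deletion term — your analysis already implicitly assumes a batch of $\Theta(\log n)$ edges is reinserted, which your stated mechanism does not supply.

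Two smaller points. First, the collapse target should absorb the old $D_i$ as well, not just $\{e\}\cup\bigcup_{j<i}E_j$; otherwise edges currently resident at level $i$ vanish. The cleaner formulation is to pick the minimal $j$ so that $D\cup F\cup\bigcup_{i\le j}E_i$ fits under the $2^j$ cap and rebuild $D_j$ from that. Second, the hypothesis is for a \emph{connected} graph, so each $D_i$ must include the relevant tree paths of $F$ (compressed to super-edges) and be run per connected component; building a decremental instance on a bag of non-tree edges alone does not meet the hypothesis. With these corrections your charging scheme goes through and is essentially the paper's.
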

\begin{proof}
Let $G$ be the fully dynamic simple graph with $n$ vertices and up to $m=\Oo(n^2)$ edges. We now describe the fully dynamic data structure to maintain the MSF $F$ of $G$.

Keep track of an array of at most $\lceil\lg m\rceil\leq\lceil2\lg n\rceil$ decremental graphs $A_i$ with non-tree edge count $|A_i| \le 2^i$ (we call this the \emph{edge-count invariant}).  Each $A_i$ corresponds to a (not necessarily connected) subgraph of $G$, where tree paths in $G$ may be represented by single edges.  We use the decremental MSF data structure for each component of each $A_i$, and maintain the invariant that each non-tree edge of $G$ is a non-tree edge of some $A_i$ (we call this the \emph{non-tree edge invariant}).  

When an edge $e$ is inserted, we use a top-tree over $F$ to determine whether $e$ becomes a tree-edge, possibly replacing some edge $e^\prime$ in which case $e^\prime$ is identified by the top-tree. The insert operation may create a new non-tree edge $e^{\prime \prime}$, in which case we must initialize $e^{\prime\prime}$ in a new decremental structure, in order to maintain the non-tree edge invariant.
To make sure the edge-count invariant is maintained, we may have to \emph{collapse} decremental structures $A_1, \ldots , A_j$ to create a new $A_j$, for some $j$. 
In general, let $D$ be a (possibly singleton) set of inserted edges. 
Choose minimally $j$ such that we can construct $A_j$  from $D \cup F \cup \bigcup_{i\le j} A_i$ without breaking the edge-count invariant. When constructing $A_j$ from the set, we keep all non-tree edges, but may introduce super-edges instead of tree paths (see~\cite{holm01connect} for details). The total number of edges in the resulting $A_j$ is at most $5$ times the number of non-tree edges, and each component of $A_j$ contains at least one non-tree edge.  The time to find the vertices and edges to put into $A_j$ is $\Oo(\log n)$ per non-tree edge.  Since $j$ was chosen minimally, each collapse ensures that at least $2^{j-1}$ non-tree edges come from $D\cup\bigcup_{i<j}A_i$. 
So if we associate $10(t_c+\log n)(\left\lceil2\lg n\right\rceil-i)$ credits with each non-tree edge in $A_i$, the construction of $A_j$ can be paid for by the non-tree edges that came from $D\cup\bigcup_{i<j}A_i$. 
Thus, when an edge is inserted, it must be given $\Oo(\log^2 n+t_c\log n)$ credits for the amortisation. 

Upon an edge deletion, delete($e$), ask each decremental structure containing $e$ for a replacement edge. It follows from the non-tree edge invariant that the cheapest of the edges returned is the desired replacement edge. Since there are $\Oo(\log n)$ decremental structures, each with $\leq n$ vertices, the time for this operation is $\Oo(\log^2 n + t_r \log n)$ plus the time paid for by credits on the edges.
However, the up to $\lceil2\lg n \rceil$ returned replacement-candidate edges have now become tree-edges in their respective decremental structures, possibly violating the non-tree edge invariant, as only one of them joins $F$. To make sure the non-tree edge invariant holds, a decremental structure with the returned edges is created. That is, these edges play the role of $D$ in the insert description above. Each reinitialized edge must be given $\Oo(\log^2 n + t_c \log n)$ credits, and there were $\Oo(\log n)$ replacement candidates, yielding an amortized deletion time of $\Oo(\log^3 n + t_c\log^2 n+t_r\log n)$.
\end{proof}

The following corollary is crucial in obtaining our improvement for fully-dynamic MSF. It shows that to obtain a faster data structure for this problem by reduction to decremental MSF, it actually suffices with a decremental MSF structure which is slower than that in~\cite{holm01connect} in the case where all edges end up being deleted.
%
\begin{corollary}\label{Cor:Reduction}
Given a decremental MSF structure with $t_c = \frac{\log ^2 n}{\epsilon\log\log n}$ and $t_r = \log^{2+\epsilon}n$ where $\epsilon < 1$ is a constant, the reduction gives a fully dynamic MSF structure with amortized update time $\Oo(\frac{\log^4}{\log\log n})$.
\end{corollary}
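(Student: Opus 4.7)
The plan is simply to plug the given values of $t_c$ and $t_r$ into the bound from Lemma~\ref{Lem:Reduction} and verify termwise that each of the three summands is $\Oo(\log^4 n/\log\log n)$. Specifically, Lemma~\ref{Lem:Reduction} yields an amortized update time of
\[
  \Oo\bigl(\log^3 n + t_c\log^2 n + t_r\log n\bigr).
\]
First I would substitute $t_c = \log^2 n/(\epsilon\log\log n)$ into the middle term, obtaining $t_c\log^2 n = \log^4 n/(\epsilon\log\log n)$, which matches the target bound up to the constant factor $1/\epsilon$ (absorbed since $\epsilon$ is a fixed constant). The first term $\log^3 n$ is trivially dominated by $\log^4 n/\log\log n$ for large $n$.

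The only term that requires a brief argument is $t_r\log n = \log^{3+\epsilon}n$. To absorb this into $\Oo(\log^4 n/\log\log n)$ I would check that $\log^{3+\epsilon}n\cdot\log\log n = \Oo(\log^4 n)$, i.e.\ $\log\log n = \Oo(\log^{1-\epsilon}n)$. Since $\epsilon<1$ is a constant, $1-\epsilon>0$, and $\log\log n$ grows strictly more slowly than any positive power of $\log n$, so this holds for all sufficiently large $n$. Summing the three terms then gives the claimed amortized bound of $\Oo(\log^4 n/\log\log n)$. There is no real obstacle here; the corollary is essentially arithmetic, and the only content is the observation that the assumption $\epsilon<1$ is exactly what is needed to tame the $t_r\log n$ contribution.
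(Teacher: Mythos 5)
Your proposal is correct and is exactly the intended argument; the paper states this corollary without proof because it is, as you say, pure arithmetic substitution into Lemma~\ref{Lem:Reduction}, with the one substantive observation being that $\epsilon<1$ makes $t_r\log n = \log^{3+\epsilon}n = o(\log^4 n/\log\log n)$. Your verification of all three terms matches what the authors clearly had in mind.
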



\section{Simple Data Structures for Dynamic Connectivity and Decremental MSF}\label{sec:Connectivity}
In this section, we give a description of the fully-dynamic connectivity data structure in~\cite{Wulff-Nilsen13a} (which is based on an earlier structure of Thorup \cite{thorup00connect}) except that shortcuts are omitted and a spanning forest is maintained. We will modify it in Section~\ref{subsec:DecMST} to support decremental MSF.

Let $G = (V,E)$ denote the dynamic graph. The data structure maintains, for each edge $e\in E$, a \emph{level} $\ell(e)$ which is an integer between $0$ and $\ell_{\max} = \lfloor\log n\rfloor$. As we shall see, the level of an edge $e$ starts at $0$ and can only increase over time and for the amortization, we can view $\ell_{\max} - \ell(e)$ as the amount of credits left on $e$.

For $0\leq i\leq\ell_{\max}$, let $E_i$ denote the set of edges of $E$ with level at least $i$ and let $G_i = (V,E_i)$. The (connected) components of $G_i$ are \emph{level $i$ clusters} or just \emph{clusters}. The following invariant is maintained:
\begin{description}
\item [Invariant:] For each $i$, any level $i$ cluster spans at most $\lfloor n/2^i\rfloor$ vertices.
\end{description}

Consider a level $i$ cluster $C$. By contracting all edges of $E_{i+1}$ in $C$, we get a connected multigraph of level $i$-edges where vertices correspond to level $(i+1)$ clusters contained in $C$. Our data structure maintains a spanning tree of this multigraph. The union of spanning trees over all clusters is a spanning forest of $G$.

The data structure maintains a \emph{cluster forest} of $G$ which is a forest $\mathcal C$ of rooted trees where a node $u$ at level $i$ is a level $i$ cluster $C(u)$. Roots of $\mathcal C$ are components of $G = G_0$ and leaves of $\mathcal C$ are vertices of $G$.
A level $i$-node $u$ which is not a leaf has as children the level $(i+1)$ nodes $v$ for which $C(v)\subseteq C(u)$. In addition to $\mathcal C$, the data structure maintains $n(u)$ for each node $u\in\mathcal C$ denoting the number of vertices of $G$ contained in $C(u)$ (equivalently, the number of leaves in the subtree of $\mathcal C$ rooted at $u$).

\subsection{Handling insertions and deletions}\label{subsec:InsDel}
When a new edge $e = (u,v)$ is inserted into $G$, it is given level $0$ and $\mathcal C$ is updated by merging the roots $r_u$ and $r_v$ corresponding to the components of $G$ containing $u$ and $v$, respectively. The new root inherits the children of both $r_u$ and $r_v$. If $r_u\neq r_v$, $e$ becomes a tree edge in the new level $0$ cluster. Otherwise, $e$ becomes a non-tree edge.

Deleting an edge $e = (u,v)$ is more involved. If $e$ is not a tree edge, no structural changes occur in $\mathcal C$. Otherwise, let $i = \ell(e)$. The deletion of $e$ splits a spanning tree of a level $i$ cluster $C$ into two subtrees, $T_u$ containing $u$ (inside some level $i+1$-cluster) and $T_v$ containing $v$. One of these trees, say $T_u$, contains at most half the vertices (in $V$) of $C$. For each level $i$ edge in $T_u$, we increase its level to $i+1$. In $\mathcal C$, this amounts to merging all nodes corresponding to level $i+1$ clusters in $T_u$ into one node, $w$; see Figure~\ref{fig:SplitMerge}(a) and (b). By the choice of $T_u$, this does not violate the invariant.
\begin{figure}
\centerline{\scalebox{0.7}{\input{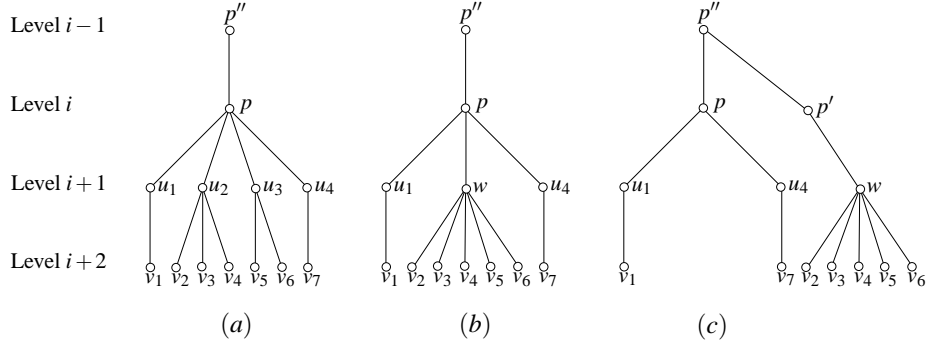}}}
\caption{(a): Part of $\mathcal C$ before the merge. (b): Level $i+1$ nodes $u_2$ and $u_3$ are merged into a new level $i+1$ node $w$. (c): A replacement level $i$ edge was not found so $w$ is given a new parent $p'$ which becomes the sibling of $p$.}
\label{fig:SplitMerge}
\end{figure}

Next, we search through (non-tree) level $i$ edges incident to $C(w)$ in some arbitrary order until some edge is found which connects $C(w)$ and $T_v$ (if any). For all visited level $i$ edges which did not reconnect the two trees, their level is increased to $i+1$, thereby paying for them being visited. If a replacement edge $(a,b)$ was found, no more structural changes occur in $\mathcal C$ and $(a,b)$ becomes a new tree edge. Otherwise, $w$ is removed from its parent $p$ (corresponding to $C$) and a new level level $i$ node $p'$ is created having $w$ as its single child and having $p$ as sibling; see Figure~\ref{fig:SplitMerge}(b) and (c). This has the effect of splitting $C = C(p)$ into two smaller level $i$ clusters. The same procedure is now repeated recursively at level $i-1$ where we try to reconnect the two trees of level $i-1$ edges containing the new level $i$ clusters $C(p)$ and $C(p')$, respectively. If level $0$ is reached and no replacement edge was found, a component of $G$ is split in two.

\subsection{Local trees}\label{subsec:LocalTrees}
To guide the search for level $i$ tree/non-tree edges, we first modify $\mathcal C$ to a forest $\mathcal C_L$ of binary trees. This is done by inserting, for each non-leaf node $u\in\mathcal C$, a binary \emph{local tree} $L(u)$ between $u$ and its children; see Figure~\ref{fig:LazyLocalTree}.
\begin{figure}
\centerline{\scalebox{0.7}{\input{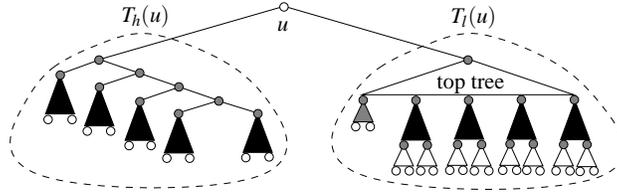}}}
\caption{The structure of local tree $L(u)$ of a node $u$ in $\mathcal C$, from~\cite{Wulff-Nilsen13a}. In $T_h(u)$, rank trees are black and the rank path and roots of rank trees are grey. In $T_l(u)$, the buffer tree is grey, top and bottom trees are white, and rank trees are black.}
\label{fig:LazyLocalTree}
\end{figure}
To describe the structure of $L(u)$, we first need to define heavy and light children of $u$. A child $v$ of $u$ in $\mathcal C$ is \emph{heavy} if $n(v)\geq n(u)/\log^{\epsilon_h}n$, where $\epsilon_h > 0$ is a constant that we may pick as small as we like. Otherwise, $v$ is \emph{light}.

The root of $L(u)$ has two children, one rooted at \emph{heavy tree} $T_h(u)$ and the other rooted at \emph{light tree} $T_l(u)$. The leaves of $T_h(u)$ resp.~$T_l(u)$ are the heavy resp.~light children of $u$. Before describing the structure of these trees, let us associate a \emph{rank} $\rank(v)\leftarrow \lfloor\log n(v)\rfloor$ to each node $v$ in $\mathcal C$.

Tree $T_h(u)$ is formed by initially regarding each heavy child of $u$ as a trivial rooted tree with a single node and repeatedly pairing roots $r_1$ and $r_2$ of trees with the same rank, creating a new tree with a root $r$ of rank $\rank(r) = \rank(r_1) + 1$ and with children $r_1$ and $r_2$. When the process stops, the remaining rooted trees, called \emph{rank trees}, all have distincts ranks and they are attached as children to a rooted \emph{rank path} $P$ such that children with larger rank are closer to the root of $P$ than children of smaller rank. We define the rank of a node on the rank path to be the larger of the ranks of its children.

Tree $T_l(u)$ is more involved. Its leaves are the light children of $u$ and they are divided into groups each having size at most $\log^\alpha n$, where $\alpha$ is a constant that we may pick as large as we like. The nodes in each group are kept as leaves in a balanced binary search tree (BBST) ordered by $n(v)$-values. One of these trees is the \emph{buffer tree} and the others are \emph{bottom trees}. We define the rank of each bottom tree root as the maximum rank of its leaves and we pair them up into rank trees exactly as we did when forming $T_h(u)$. However, instead of attaching the rank tree roots to a rank path, we instead keep them as leaves of a BBST called the \emph{top tree}, where again the ordering is by rank. We also have the buffer tree root as a child of the top tree and we regard it as having smaller rank than all the other leaves.

It was shown in~\cite{Wulff-Nilsen13a} that $\mathcal C_L$ has height $\Oo(\frac 1{\epsilon_h}\log n)$. Refer to nodes of $\mathcal C_L$ belonging to $\mathcal C$ as \emph{cluster nodes}.

\paragraph{Merging local trees}
We need to support the merge of local trees $L(u)$ and $L(v)$ in $\mathcal C_L$ corresponding to a merge of cluster nodes $u$ and $v$ into a new node $w$. First, we merge the buffer trees of $L(u)$ and $L(v)$ into a new BBST $T_b$ by adding the leaves of the smaller tree to the larger tree. Heavy trees $T_h(u)$ and $T_h(v)$ have their rank paths removed and leaves that should be light in $L(w)$ are removed from $T_h(u)$ and $T_h(v)$ and added as leaves of $T_b$. For each leaf removed from $T_h(u)$ and $T_h(v)$, we remove their ancestor rank nodes. We end up with subtrees of the original rank trees in $T_h(u)$ and $T_h(v)$ and these subtrees are paired up as before and attached to a new rank path for $T_h(w)$. Tree $T_b$ becomes a buffer tree in $T_l(w)$ if its number of leaves does not exceed $\log^\alpha n$; otherwise, it becomes a bottom tree in $T_l(w)$, leaving an empty buffer tree. Rank trees in $T_l(u)$ and $T_l(v)$ are stripped off from their top trees and paired up into new rank trees as before (here we include $T_b$ if it became a bottom tree) and these are attached as leaves to a new top tree for $T_l(w)$.

In the above merge, let $p$ be the parent of $u$ and $v$ in $\mathcal C$. In $\mathcal C_L$, we need to delete $u$ and $v$ as leaves of $L(p)$ and to add $w$ as a new leaf of $L(p)$. We shall only describe the deletion of $u$ as $v$ is handled in the same manner. We consider four cases depending on which part of $L(p)$ $u$ belongs to:
\begin{itemize}
\item If $u$ is a leaf in the buffer tree of $T_l(p)$, we delete it with a standard BBST operation in that tree.
\item If $u$ is a leaf in a bottom tree $B$ of $T_l(p)$, a similar BBST update happens in $B$. Additionally we update the max rank of leaves in $B$ as this rank is associated with the root of $B$. If the maximum does not decrease, no further updates are needed. Otherwise, we remove all ancestor rank nodes of $B$ in $T_l(u')$, pair the resulting rank trees as before and attach them as leaves of the top tree.
\item If $u$ is a leaf in $T_h(p)$, we remove it and its ancestor rank nodes in $T_h(p)$, pair up the resulting rank trees and attach them to a new rank path for $T_h(p)$.
\end{itemize}
To add $w$ as a new leaf of $L(p)$, we only have two cases. If $w$ is a heavy node, we regard it as a trivial rank tree, delete the rank path of $T_h(p)$, repeatedly pair up the rank trees (including $w$) and reattach them with a new rank path to form the updated $T_h(p)$. If instead $w$ is a light node, we add it to the buffer tree of $T_l(p)$ (which may be turned into a bottom tree, as described above).

\paragraph{Handling cluster splits}
What remains is to describe the updates to local trees after splitting a level $i$ cluster in two. Let $w$, $p$, and $p'$ be defined as in the previous subsection and let $p''$ be the parent of $p$ and $p'$ in $\mathcal C$ (Figure~\ref{fig:SplitMerge}(b) and (c)). Creating $L(p')$ is trivial as $p'$ has only the single child $w$ in $\mathcal C$ and attaching $p'$ as a leaf of $L(p'')$ is done as above. The removal of $w$ from $L(p)$ decreases $n(p)$ which may cause some light children of $p$ in $\mathcal C$ to become heavy. In $\mathcal C_L$, each corresponding leaf of $T_l(p)$ is removed and added to $T_h(p)$ and $L(p)$ is updated accordingly as described above. Since $n(p)$ decreases, $p$ might change from being a heavy child of $p''$ to being a light child. If so, we move it from $T_h(p'')$ to the buffer tree of $T_l(p'')$, as described above.

\paragraph{Bitmaps}
Having modified $\mathcal C$ into the forest $\mathcal C_L$ of binary trees, we add bitmaps to nodes of $\mathcal C_L$ to guide the search for level $i$ edges. More precisely, each node $u\in\mathcal C_L$ is associated with two bitmaps $\mathit{tree}(u)$ and $\mathit{nontree}(u)$, where the $i$th bit of $\mathit{tree}(u)$ ($\mathit{nontree}(u)$) is $1$ iff there is at least one level $i$ tree (non-tree) edge incident to a leaf in the subtree of $\mathcal C_L$ rooted at $u$. Since $\mathcal C_L$ is binary, these bitmaps enable us to identify a level $i$ tree/non-tree edge incident to a cluster $C(u)$ by traversing a path down from $u$ in $\mathcal C_L$ in time proportional to its length by backtracking when bitmaps with $i$th bit $0$ are encountered. When a level $i$ tree edge is removed (which happens if it is deleted from $G$ or has its level increased), then for each of its endpoints $u$, we set $\mathit{tree}(u)[i] = 0$ and update the bitmaps for all ancestors $v$ of $u$ in $\mathcal C_L$ bottom-up by taking the logical 'or' of the $\mathit{tree}$-bitmaps of its children. A similar update is done to $\mathit{nontree}$-bitmaps if $u$ is a non-tree edge. When inserting a level $i$ tree/non-tree edge, bitmaps are updated in a similar manner.

\subsection{Supporting decremental MSF}\label{subsec:DecMST}
We can convert the above fully dynamical connectivity structure to a decremental MSF structure by using a trick from~\cite{holm01connect}. For decremental MSF, we can assume that the initial graph is simple and connected and that all weights are distinct and belong to $\{0,1,\ldots,n^2\}$ by doing an initial comparison sort and then working on ranks of weights instead. All edges start at level $0$ and we initialize the spanning forest to the MSF. When searching through the level $i$ non-tree edges incident to $C(w)$ as in Section~\ref{subsec:InsDel}, we do so in order of increasing weight. We support this by letting each node of $\mathcal C_L$ contain the weight of the cheapest level $i$-edge below it, for each $i$. To find the cheapest non-tree edge with an endpoint in $C(w)$ we can follow the cheapest level $i$ weight down from $w$ in $\mathcal C_L$ until we reach a leaf $x$ and then take the cheapest level $i$-edge incident to $x$. As shown in~\cite{holm01connect}, this small modification to the connectivity structure suffices to support decremental MSF.

\paragraph{Performance}
Finding the initial MSF can be done in $\Oo(m + n\log n)$ time using Prim's algorithm with Fibonacci heaps.
We split the time complexity analysis for the rest of the above data structure into three parts: searching for edges down from $C(w)$ in $\mathcal C_L$ to identify a cheapest level $i$-edge incident to a leaf $x$, maintaining the edge weights associated with nodes of $\mathcal C_L$, and making structural changes to $\mathcal C_L$.

To analyze the time for the first part, note that since $\mathcal C_L$ has height $\Oo(\log n)$, searching down from $C(w)$ to $x$ takes $\Oo(\log n)$ time. In order to efficiently identify the cheapest level $i$-edge incident to such a leaf $x$, we extend the data structure by letting $x$ have an $\Oo(\log n)$ array of doubly-chained lists of edges, so let $E_i(x)$ be the list of level $i$ non-tree edges adjacent to $w$ in order of increasing weight. The cheapest level $i$-edge incident to $x$ is then the first edge of $E_i(x)$ and can thus be found in $\Oo(1)$ time.  When increasing the level of an edge $e=(x,y)$ from $i$ to $i+1$, it is not a replacement edge, and is therefore the cheapest level $i$ non-tree edge adjacent to any vertex in its component.  In particular it is the cheapest level $i$ non-edge incident to $x$ and $y$ and is therefore at the start of $E_i(x)$ and $E_i(y)$. Furthermore, (as shown in~\cite{holm01connect}) it is costlier than all other edges that have been moved to level $i+1$ earlier so when we move it all we need to do is put it at the end of $E_{i+1}(x)$ and $E_{i+1}(y)$ to keep them sorted. This takes $\Oo(1)$ time.

We have shown how the cheapest non-tree edge incident to $C(w)$ can be found in $\Oo(\log n)$ time. Maintaining edge weights associated with nodes of $\mathcal C_L$ can also be done in $\Oo(\log n)$ time since for each edge level change (or the deletion of an edge), only the weights along the leaf-to-root paths in $\mathcal C_L$ from the endpoints of the edge need to be updated. It remains to bound the time for structural changes to $\mathcal C_L$. It was shown in~\cite{Wulff-Nilsen13a} that by picking constant $\epsilon_h$ sufficiently small and constant $\alpha$ sufficiently large (see definitions in Section~\ref{subsec:LocalTrees}), this takes amortized $\Oo(\log n/\log\log n)$ time per edge level change plus an additional $\Oo(\log^2 n/\log\log n)$ worst-case time per edge deletion.

We conclude from the above that the total time to build our decremental MSF structure on a simple connected graph with $n$ vertices and $m$ edges, and then deleting $d$ edges is $\Oo(m\log^2 n + d\log^2 n)$.  In the next section, we give a variant of this data structure where exactly the same structural changes occur in $\mathcal C_L$ but where the time to search for edges is sped up using a new shortcutting system together with fast priority queues. Since structural changes take a total of $\Oo(m\log^2n/\log\log n + d\log^2n/\log\log n)$ time, these will not be the bottleneck so we ignore them in the time analysis in the next section. Also, the structure in~\cite{Wulff-Nilsen13a} can identify the parent cluster node of a cluster in $\Oo(\log n/\log\log n)$ time so we shall also ignore this cost.


\section{Faster Data Structure for Dynamic MSF}\label{sec:FastMSF}
In this section, we present our new data structure for decremental MSF. Assume that the initial graph is connected. If not, we maintain the data structure separately for each component. The total time bound is $\Oo(m\log^2n/\log\log n + d\log^{2+\epsilon}n)$ for a constant $\epsilon < 1$, where the initial graph has $m$ edges and $n$ vertices and where $d$ edges are deleted in total. By Corollary~\ref{Cor:Reduction}, this suffices in order to achieve $\Oo(\log^4n/\log\log n)$ update time for fully-dynamic MSF. 

A bottleneck of the simple data structure for decremental MSF presented in Section~\ref{subsec:DecMST} is moving up and down trees of $\mathcal C_L$. 
The data structure identifies level $i$-edges incident to a level $(i+1)$-cluster $C(u)$ in order of increasing weight by moving down $\mathcal C_L$ from node $u$, always picking the child (or children) with the cheapest level $i$-edge below it. When a leaf is reached, the cheapest level $i$-edge $e$ incident to it is traversed. If both endpoints of $e$ were identified in the downward search then we do not need an upwards search. If only one endpoint was identified then we do an upwards search in $\mathcal C_L$ from the other endpoint until reaching the node for a level $(i+1)$-cluster. Each upwards search can trivially be done in $\Oo(\log n)$ time as this is a bound on the height of trees in $\mathcal C_L$. We claim that this is actually fast enough. To see why, note that we only do an upwards search when a reconnecting edge is found. At most one reconnecting edge is found per edge deleted so we can in fact afford to spend $\Oo(\log^{2+\epsilon}n)$ time on the upwards search. In the following, we can thus restrict our attention to speeding up downward searches. It suffices to get a search time of $\Oo(\log n/\log\log n)$ since for every two downward searches, we either increase the level of an edge or we find a reconnecting edge.

\subsection{A downwards shortcutting system}
We use a downwards shortcutting system with fast min priority queues to speed up downward searches. Certain nodes of $\mathcal C_L$ are augmented with min priority queues keyed on edge weights. Since we may assume that edge weights are in the range $\{0,1,\ldots,n^2\}$, we can use fast integer priority queues. In the following, we assume constant time for each queue operation. As mentioned in the introduction, a less efficient queue will slow down the performance of our data structure by a factor equal to its operation time.

The nodes of $\mathcal C_L$ with associated priority queues are referred to as \emph{queue nodes}. The following types of nodes are queue nodes ($\epsilon_q$ is a small constant to be chosen later):
\begin{enumerate}
\item cluster nodes whose level is divisible by $i\lceil\epsilon_q\log\log n\rceil$ for an integer $i$,
\item heavy tree nodes $u$ with a parent $v$ in $\mathcal C_L$ such that $\rank(u)\leq i\lceil\epsilon_q\log\log n\rceil < \rank(v)$ for an integer $i$,
\item rank nodes of light trees whose rank is divisible by $\lceil\epsilon_q\log\log n\rceil$,
\item roots and leaves of buffer, bottom, and top trees.
\end{enumerate}

Each queue node $u$ (excluding leaves of $\mathcal C_L$) is associated with an array whose $i$th entry points to a min-queue $Q_i(u)$, for each level $i$. If $u$ is a proper ancestor of a level $i$-node, $Q_i(u)$ is empty. Otherwise, for each nearest descending queue node $v$ of $u$ in $\mathcal C_L$, $Q_i(u)$ contains the node $v$ with associated key $k$ denoting the weight of the cheapest level $i$-edge incident to a leaf of $\mathcal C_L$ below $v$.

\paragraph{Traversing the shortcutting system}
The priority queues associated with queue nodes induce our downwards shortcutting system in $\mathcal C_L$. To see how, consider a level $(i+1)$-cluster $C(u)$. To identify the cheapest level $i$-edge $e$ incident to $C(u)$, assume first that $u$ is a queue node. Then a minimum element in $Q_i(u)$ is a node $v$ below $u$ with $e$ incident to $C(v)$. We refer to $(u,v)$ as a \emph{shortcut}. Whereas our simple data structure would traverse the path from $u$ down to $v$ in $\mathcal C_L$, our new data structure can use the shortcut $(u,v)$ to jump directly from $u$ to $v$ within the time it takes to obtain the minimum element in $Q_i(u)$. At $v$, we identify a minimum element $w$ in $Q_i(v)$ and jump directly to this node along $(v,w)$. This shortcut traversal continues until a leaf of $\mathcal C_L$ is reached, and $e$ is identified as one of the edges incident to this leaf. If both endpoints of $e$ are below $u$ in $\mathcal C_L$, one of the queues contains two distinct minimum elements $v$ and $v'$, corresponding to where the paths down to the endpoints of $e$ branch out. In this case, we search down from both $v$ and $v'$.

Now assume that $u$ is not a queue node. Then all nearest descending queue nodes $v$ of $u$ are visited and for each of them the minimum element in $Q_i(v)$ is identified and its associated key $k_i(v)$. The search procedure described above is then applied to each of the at most two nodes $v$ with minimum key $k_i(v)$.

\paragraph{Performance}
Let us analyze the time for the search procedure just described. The following lemma bounds the time to identify the nearest descending queue nodes.
\begin{lemma}\label{Lem:DescendingQueueNodes}
The set of nearest descending queue nodes of a cluster node can be found in $\Oo(\log^{3\epsilon_q}n)$ time.
\end{lemma}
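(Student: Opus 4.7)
The plan is to describe a top-down DFS from $u$ in $\mathcal{C}_L$ that halts at each queue node it encounters, and then to bound both the size of the visited region and the work per node. Writing $c=\lceil\epsilon_q\log\log n\rceil$, I intend to show that the ``cap'' of nodes strictly above $u$'s nearest descending queue nodes has at most $\Oo(\log^{3\epsilon_q}n)$ nodes and that each of them is processed in $\Oo(1)$ time (queue-type check, pointer follow, rank lookup), which yields the stated bound.

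First I would examine $L(u)$ itself. The light-tree descent stops immediately: the root of $T_l(u)$ coincides with the root of the top tree and is a queue node by type~4. On the heavy-tree side I would walk down the rank path of $T_h(u)$; since ranks strictly decrease along it, some threshold $ic$ is crossed after at most $c$ rank-path steps, producing a type-2 queue node on the path. For each non-queue rank-path node visited en route, I would branch into the attached rank tree, where type-2 queue nodes appear precisely at nodes whose rank is a multiple of $c$, so a queue node is reached within depth $c$ of the rank-tree root. Hence the nodes visited inside one local tree number $\Oo(c\cdot 2^c)=\Oo(\log^{\epsilon_q+o(1)}n)$.

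Second, I would handle the rank trees whose root rank falls below the nearest threshold and therefore contain no interior queue node; for these the descent spills over to the rank-tree leaves, i.e.\ heavy children of $u$, and recurses into their local trees. Each cluster has $\Oo(\log^{\epsilon_h}n)$ heavy children, and after at most $c$ levels of cluster recursion the current cluster's level becomes divisible by $c$, so the descent terminates at a type-1 queue cluster node. Multiplying the per-local-tree cost by the recursion factor, and choosing $\epsilon_h$ small enough relative to $\epsilon_q$, caps the total visited-node count at $\Oo(\log^{3\epsilon_q}n)$.

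The hard part will be calibrating the cluster-level recursion: naively combining depth $c$ with branching $\Oo(\log^{\epsilon_h}n)$ produces $\log^{c\epsilon_h}n$, which is super-polylogarithmic for a genuinely constant $\epsilon_h$. The right move is to observe that cluster descent only occurs through rank trees that were already too shallow to contain a type-2 queue node, which in turn forces the ranks of subsequently visited heavy descendants to be small; this bounds how much branching can accumulate across the $\Oo(c)$ recursion depth. Combining this structural observation with the freedom to ``pick $\epsilon_h$ as small as we like'' yields the $\log^{3\epsilon_q}n$ exponent without requiring $\epsilon_h$ to depend on~$n$, and the rest is a straightforward accounting of descents through rank paths, rank trees, and top/bottom/buffer trees, whose queue nodes all sit at known, well-spaced locations.
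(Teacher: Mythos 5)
Your high-level plan—DFS from $u$, halt at queue nodes, bound the size of the explored "cap"—is the same as the paper's, and your observations about where queue nodes sit (top-tree roots stop light descents immediately; type-2 nodes appear at rank-threshold crossings in heavy trees; type-1 nodes after at most $c$ cluster levels) are all correct. But your decomposition into ``per-local-tree cost'' times a ``recursion factor'' is the wrong way to slice the argument, and the gap you yourself flag in the last paragraph is real: you never close it.

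The fix you sketch is not quite the right one, and in particular the appeal to ``picking $\epsilon_h$ small enough relative to $\epsilon_q$'' is a red herring—the bound the paper proves does not depend on $\epsilon_h$ at all. The clean argument does not multiply a per-local-tree node count by a branching-times-depth recursion factor; instead it bounds the \emph{depth} of the cap $T$ (the subtree of $\mathcal C_L$ rooted at $u$ whose leaves are the nearest descending queue nodes) directly, and then uses only the fact that $\mathcal C_L$ is binary. The key observation you are circling but not stating is a \emph{global} monotonicity: along any root-to-leaf path $P$ in $T$, the ranks of rank-path and rank-tree nodes are weakly decreasing across all the heavy trees $P$ passes through, not just within one local tree. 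Since rank strictly decreases along every rank-tree/rank-path edge and a type-2 queue node is created each time a multiple of $c=\lceil\epsilon_q\log\log n\rceil$ is crossed, $P$ can contain at most $c$ such edges in total, summed over all local trees on the path—not $c$ per local tree. Similarly, cluster levels strictly increase going down, so $P$ contains at most $c$ non-queue cluster nodes before hitting a type-1 node; and each heavy tree contributes $\Oo(1)$ further edges not in a rank tree or rank path. Hence $P$ has length $\Oo(c)$, and $|T|=2^{\Oo(c)}=\Oo(\log^{3\epsilon_q}n)$ because $\mathcal C_L$ is binary. With your decomposition the $\log^{\epsilon_h}n$ branching really does look like it compounds over $\Oo(c)$ cluster levels; the reason it does not is exactly that the branching happens only along rank-tree edges, and there are only $\Oo(c)$ of those on any one root-to-leaf path, globally. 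Until you state and use that global bound, the super-polylogarithmic $\log^{c\epsilon_h}n$ term you identify is not discharged.
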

\begin{proof}
Let $u$ be a cluster node and let $T$ be the subtree of $\mathcal C_L$ rooted at $u$ whose leaves are queue nodes and whose non-leaf nodes are not. It suffices to show that $|T| = O(\log^{3\epsilon_q}n)$. Note that since roots of light trees are queue nodes, all non-leaf nodes of $T$ except possibly $u$ belong to heavy trees. Consider a root-to-leaf path $P$ in $T$. Since ranks go strictly down along rank paths and along root-to-leaf paths in rank trees, at most $\lceil\epsilon_q\log\log n\rceil$ edges of $P$ are contained in rank trees or rank paths. Since levels of cluster nodes go strictly down along $P$, there are at most $\lceil\epsilon_q\log\log n\rceil$ cluster nodes on $P$ that are not queue nodes. A traversal of $P$ through a heavy tree encounters at most two edges not belonging to the rank path or a rank tree. Hence $P$ contains at most $3\lceil\epsilon_q\log\log n\rceil$ edges. Since $\mathcal C_L$ is binary, $|T| = O(\log^{3\epsilon_q}n)$.
\end{proof}
If our initial node $u$ is not a queue node, we can thus in $\Oo(\log^{3\epsilon_q}n)$ time find all nearest descending queue nodes of $u$ and among these obtain the at most two nodes $v$ with smallest key in $Q_i(v)$.


Now, assume that the initial node $u$ is a queue node and consider the shortcut path $P$ of queue nodes from $u$ to a leaf that the procedure visits. The number of visited queue nodes of type $1$ is clearly $\Oo(\log n/(\epsilon_q\log\log n))$. Since ranks of nodes along $P$ cannot increase and since the difference in rank between two consecutive rank nodes on $P$ is at least $\lceil\epsilon_q\log\log n\rceil$, the number of queue nodes of type $2$ or $3$ is also $\Oo(\log n/(\epsilon_q\log\log n))$. Finally, since the rank difference between a cluster node $u$ and any leaf in $T_l(u)$ is $\Omega(\frac 1{\epsilon_h}\log\log n)$ (see~\cite{Wulff-Nilsen13a}), $P$ contains only $\Oo(\log n/(\epsilon_h\log\log n))$ queue nodes of type $4$. Given our downwards shortcutting system, the cheapest level $i$-edge incident to $C(u)$ can thus be found in $\Oo(\log^{3\epsilon_q}n + (\frac 1{\epsilon_h} + \frac 1{\epsilon_q})\log n/\log\log n)$ time. Below we show how to maintain this system efficiently under changes to $\mathcal C_L$.

\subsection{Dealing with non-topological changes}
Two types of changes occur in $\mathcal C_L$: topological changes when cluster nodes are merged or split and non-topological changes when an edge increases its level or is removed and information about which edges are the cheapest below a cluster node needs to be updated. We start with the non-topological changes.

Suppose a level $i$-edge $e$ disappears, either because it is deleted or because its level is increased to $i+1$. Then we need to update priority queues of queue nodes accordingly. If $\ell(e)$ increases then the two downward paths identified with our shortcutting system contain all the queue nodes whose level $i$-queues need to be updated. For each endpoint $x$ of $e$, we traverse each of these paths bottom-up. Let $u$ be the current non-leaf node in one of these traversals and let $v$ be its predecessor. Note that the key of $v$ in $Q_i(u)$ equals the weight $w(e)$ of $e$. We increase this key to the key for the minimum element in $Q_i(v)$ (or remove $v$ from $Q_i(u)$ if $Q_i(v)$ is empty). Otherwise we stop as no queue nodes above $u$ need updates. As each queue update takes $\Oo(1)$ time, total time is bounded by the number $\Oo((\frac 1{\epsilon_h} + \frac 1{\epsilon_q})\log n/\log\log n)$ of queue nodes considered.

We also need to update priority queues for level $(i+1)$-edges since $e$ has its level increased to $i+1$. Note that all the queue nodes that need to be updated belong to the two downward paths traversed. Again, we traverse each path bottom-up. Let $u$ be the current non-leaf node in one of the traversals and let $v$ be its predecessor. If $v$ is not present in $Q_{i+1}(u)$, we add it with key $w(e)$. Otherwise, if the key of $v$ in $Q_{i+1}(u)$ is greater than $w(e)$, we decrease it to $w(e)$. In both cases, we then proceed upwards. Otherwise, we stop since no queues above $u$ need updates. Total time to update level $(i+1)$-queues is $\Oo((\frac 1{\epsilon_h} + \frac 1{\epsilon_q})\log n/\log\log n)$.

It remains to consider the case where $e$ disappears because it was deleted. Then we identify all the queue nodes above $e$ that need to be updated by traversing the leaf-to-root paths in $\mathcal C_L$ for the endpoints of $e$. The queue nodes visited have their queue nodes updated as described above. Since $\mathcal C_L$ has height $\Oo(\frac 1{\epsilon_h}\log n)$, total time is $\Oo(\frac 1{\epsilon_h}\log n + (\frac 1{\epsilon_h} + \frac 1{\epsilon_q})\log n/\log\log n)$. This completes the description of how to deal with non-topological changes.

\subsection{Dealing with topological changes}
Now, we describe how to maintain queues under topological changes to $\mathcal C_L$. We will assume that deleting a shortcut is free as it is paid for when the shortcut is formed. In our analysis for bounding the total time to form shortcuts, we shall use the accounting method; during the course of the algorithm, credits will be associated with certain parts of $\mathcal C_L$ and each credit can pay for a constant amount of work. Denote by $s_{\max} = \log^\alpha n$ the maximum number of leaves of a buffer tree. The following invariants are maintained:
\begin{itemize}
\item Each leaf of a heavy tree contains $(2 + \log s_{\max})\log n$ credits (\emph{heavy tree invariant}),
\item Each leaf of a buffer tree contains $(2 + \log s_{\max} - \log s)\log n$ credits where $s$ is the number of leaves in the tree (\emph{buffer tree invariant}).
\item Each leaf of a bottom tree contains $1$ credit (\emph{bottom tree invariant}).
\end{itemize}
\begin{lemma}\label{Lem:BufferTreeCredits}
A buffer tree with $s_1$ leaves contains more credits than a buffer tree with $s_2 < s_1$ leaves.
\end{lemma}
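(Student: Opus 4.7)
By the buffer tree invariant, a buffer tree with $s$ leaves contains exactly $s \cdot (2 + \log s_{\max} - \log s)\log n$ credits in total. So the statement reduces to showing that the real-valued function
\[
  f(s) = s\,(2 + \log s_{\max} - \log s)
\]
is strictly increasing on the interval of admissible sizes $1 \leq s \leq s_{\max}$. My plan is to prove this monotonicity and then simply plug in $s_1$ and $s_2$.

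For the monotonicity I would rewrite $f(s) = s\cdot \log(4 s_{\max}/s)$ and differentiate: $f'(s) = \log(4 s_{\max}/s) - 1/\ln 2$. This is positive exactly when $4 s_{\max}/s > 2^{1/\ln 2} = e$, i.e.\ when $s < 4 s_{\max}/e$. Since $4/e > 1$, the inequality $s \leq s_{\max} < 4 s_{\max}/e$ holds throughout the relevant range, so $f'(s) > 0$ and $f$ is strictly increasing on $[1, s_{\max}]$. Therefore $f(s_1) > f(s_2)$ whenever $s_{\max} \geq s_1 > s_2 \geq 1$, and multiplying by the common factor $\log n > 0$ gives the desired comparison of total credits.

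If one prefers to avoid calculus, the same fact can be shown in an elementary way by verifying $f(s+1) - f(s) > 0$ for every integer $s$ in the range: this difference equals $(2 + \log s_{\max} - \log(s+1)) + s\,\log(s/(s+1))$, and using $s\log(1 + 1/s) \le 1/\ln 2 < 2$ together with $\log s_{\max} \geq \log(s+1)$ (valid for $s+1 \leq s_{\max}$) shows the quantity is strictly positive. I would mention only the slicker differentiation argument in the final write-up.

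The only potential pitfall is edge cases: a buffer tree with $s = s_{\max}$ leaves is the extremal size allowed before it is converted into a bottom tree, and a buffer tree with $s = 1$ leaf is the smallest non-empty case; both lie within the range where $f$ is increasing, so the bound $s \le s_{\max} < 4 s_{\max}/e$ is comfortably satisfied and no boundary subtlety arises.
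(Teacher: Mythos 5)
Your proof is correct and takes essentially the same route as the paper: both define the total-credit function $f(s) = s(2 + \log s_{\max} - \log s)$ (up to the $\log n$ factor), differentiate, and observe that $f'(s) > 0$ on $[1, s_{\max}]$ because the constant $2$ dominates $1/\ln 2$ while $\log s_{\max} - \log s \geq 0$. Your rewriting to $\log(4 s_{\max}/s)$ and the explicit threshold $s < 4 s_{\max}/e$ is a cosmetic variation of the paper's bound, and the alternative discrete argument you sketch is a nice extra but not needed.
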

\begin{proof}
The function $f(x) = x(2 + \log s_{\max} - \log x)$ is monotonically increasing on $[1, s_{\max}]$ since $f'(x) = 2 + \log s_{\max} - \log x - 1/\ln 2 > \log s_{\max} - \log x\geq 0$ for all $x\in[1, s_{\max}]$.
\end{proof}
Recall that $\epsilon_h$ was introduced when defining heavy and light children. We observe that initially, all edges of the decremental MSF structure have level $0$ and because of our assumption that the initial graph is connected, $\mathcal C$ consists of a single root $r$ with each vertex of the graph as a child, implying that $\mathcal C_L$ is the single local tree $L(r)$. This local tree contains at most $\log^{\epsilon_h} n$ leaves in the heavy tree and a single buffer tree with at most $s_{\max}$ leaves. Furthermore, there are at most $n$ bottom tree leaves. By Lemma~\ref{Lem:BufferTreeCredits}, the initial amount of credits required is at most $\log^{\epsilon_h}n(2 + \log s_{\max})\log n + 2s_{\max}\log n + n$.


\subsubsection{Merging cluster nodes}
The general type of change to $\mathcal C$ during the deletion of a level $i$-tree edge was described in Section~\ref{subsec:InsDel} and the corresponding updates to local trees in $\mathcal C_L$ was described in Section~\ref{subsec:LocalTrees}. The first step is to merge all level $(i+1)$ clusters on the smaller component of the split level $i$-tree (Figure~\ref{fig:SplitMerge}(a) and (b)). We now describe how to update shortcuts accordingly. For now, assume that only two level $(i+1)$ clusters $C(u)$ and $C(v)$ are merged into a new level $(i+1)$ cluster $C(w)$. We later extend this to the merge of an arbitrary number of clusters. It may be helpful to consult Figure~\ref{fig:LazyLocalTree} in the following.

\paragraph{Shortcuts through the heavy tree}
We say that a shortcut $(x,y)$ \emph{goes through} a node $z\in\mathcal C_L$ if $x$ is an ancestor of $z$ and $y$ is a descendant of $z$ (where possibly $x = z$ or $y = z$). In the new local tree $L(w)$, we obtain all shortcuts through nodes of $T_h(w)$ in a bottom-up manner. Note that queue nodes in the subtrees of $\mathcal C_L$ rooted at leaves of $T_h(w)$ need not be updated. For each queue node $a\in T_h(w)$, assume that all queues of its nearest descending queue nodes have been constructed. Then for each level $j$, we construct $Q_j(a)$ in a brute-force manner by visiting all nearest descending queue nodes $b$ of $a$ and for each of them adding the cheapest node of $Q_j(b)$ to $Q_j(a)$. By Lemma~\ref{Lem:DescendingQueueNodes}, this takes $\Oo(\log^{3\epsilon_q}n)$ time for each $j$, giving a total time of $\Oo(\log^{1 + 3\epsilon_q}n)$ to construct the queues associated with $a$. Since $T_h(w)$ has size $\Oo(\log^{\epsilon_h}n)$, total time to construct all shortcuts through nodes of $T_h(w)$ is $\Oo(\log^{1 + 3\epsilon_q + \epsilon_h}n)$ which over all levels is $\Oo(\log^{2 + 3\epsilon_q + \epsilon_h}n)$. Adding $(2 + \log s_{\max})\log n$ credits to each leaf of $T_h(w)$ is dominated by the cost to construct shortcuts.

\paragraph{Shortcuts through the light tree}
Next we describe how to form shortcuts through $T_l(w)$. Let $B_u$ resp.~$B_v$ be the buffer trees of $u$ and $v$, respectively, before the merge. The leaves of the buffer tree $B_w$ of $w$ is the union of leaves of $B_u$ and $B_v$ as well as possibly some leaves from $T_h(u)$ and $T_h(v)$. For now, assume that we obtain $B_w$ simply as the union of $B_u$ and $B_v$, and that $B_u$ and $B_v$ together has at most $s_{\max}$ leaves. Let $r_u$ resp.~$r_w$ be the roots of $B_u$ resp.~$B_w$. Assume w.l.o.g.~that the number $s_u$ of leaves of $B_u$ is smaller than the number of leaves of $B_v$. Tree $B_w$ is formed by adding each leaf of $B_u$ to $B_v$ one by one. As each leaf $l$ is added to $B_v$, we also add shortcuts of the form $(r_u,l)$ to $B_v$. Total time to add all shortcuts is $\Oo(s_u\log n)$. To see that we can afford this, observe that the leaves from $B_v$ will not require more credits when added to $B_w$ since $B_w$ contains at least as many leaves as $B_v$. Before the merge, $B_u$ has $c_u = (2 + \log s_{\max} - \log s_u)\log n$ credits per leaf. Since $B_w$ has at least $2s_u$ leaves, each of its leaves requires at most $(2 + \log s_{\max} - \log (2s_u))\log n = c_u - \log n$ credits so spending $\log n$ credits per leaf of $B_u$ pays for the $\Oo(s_u\log n)$ time spent on the merge.

If $B_u$ and $B_v$ together have more than $s_{\max}$ leaves, we do the same but the result is a new bottom tree. By definition, $B_u$ and $B_v$ each contain at most $s_{\max}$ leaves before the merge so there is at least $1$ credit left on each leaf of $B_w$ after the merge. Hence, the bottom tree invariant is satisfied for the new bottom tree.

Above we assumed that $B_w$ was simply the union of $B_u$ and $B_v$. Now assume that in addition it contains leaves from $T_h(u)$ and $T_h(v)$. By our heavy tree invariant, each such leaf has $(2 + \log s_{\max})\log n$ credits. After having merged $B_u$ and $B_v$, consider adding these leaves one by one, regarding each of them as a trivial buffer tree with $s = 1$ leaf. Then it has the amount of credits required by the buffer tree invariant so the same analysis as above shows that the credits on each leaf can pay for all the required updates.

The remaining shortcuts through $L(w)$ that we need to form are those incident to a top tree node or to a rank node in $T_l(w)$. We use the same analysis as by Thorup~\cite{thorup00connect} and Wulff-Nilsen~\cite{Wulff-Nilsen13a}: note that a bottom tree $B$ does not cause changes to the rank tree above it unless the rank of its root changes. Since this rank is the maximum rank of leaves in $B$, and since this maximum can only decrease (we never add new leaves to $B$ and ranks of existing leaves cannot increase), $B$ causes at most $\log n$ changes to the at most $\log n$ rank nodes above it. By our bottom tree invariant, each bottom tree initially has at least $\log^\alpha n$ credits. Since we are free to pick constant $\alpha$ as large as we like, we may assume that each rank node update in $T_l(w)$ (including top tree leaves excluding the root of the buffer tree) can be paid for by $\log^{\beta} n$ credits where $\beta$ is a constant (growing with $\alpha$) that can be picked as large as we like. Recall that a rank node $a$ in $T_l(w)$ is a queue node if its rank is divisible by $\lceil\epsilon_q\log\log n\rceil$ so the number of its nearest descending queue nodes is at most $2\log^{\epsilon_q} n$. Using a similar analysis as for the heavy tree, we can build all queues for $a$ in $\Oo(\log^{1 + \epsilon_q}n)$ time which can be paid for by the $\log^{\beta}n$ credits on $a$, for sufficiently big $\beta$.

The only new shortcuts not accounted for above are those ending in the root of the top tree and the root of the buffer tree of $T_l(w)$. There are at most $2\log n$ of these and they can be formed in $\Oo(\log n)$ time.

Above, we have shown how to update our shortcutting system within the desired time bound when exactly two clusters are merged. It is straightforward to extend this to an arbitrary number of clusters. To see this, note that all clusters are merged into a single cluster $w$ so we only get a single heavy tree $T_h(w)$ in the end. Updating shortcuts through $T_h(w)$ and assigning credits to its leaves is done as above. We need to merge multiple buffer trees as well as adding leaves from heavy trees as leaves of buffer trees. Exactly the same analysis as above carries through if we simply view these merges pairwise.

\paragraph{Shortcuts for the parent cluster}
Having updated all the shortcuts through $L(w)$, what remains is to update shortcuts through $L(p)$ where $p$ is the parent of $w$ in $\mathcal C$. The topological changes occuring in $L(p)$ consist of deletions of leaves corresponding to the clusters merged into $w$, and the addition of a single new leaf, namely $w$. Deleting a leaf from a bottom tree is free since it only requires deleting shortcuts from the root of the bottom tree to the deleted leaf and shortcut deletions have zero cost. Deleting leaves from $T_h(p)$ may cause topological changes to this tree but only shortcuts through $T_h(p)$ are affected and these are found as above. Finally, consider deletions of leaves from the buffer tree $B_p$ of $T_l(p)$. By Lemma~\ref{Lem:BufferTreeCredits}, the amount of credits in $B_p$ cannot increase by these deletions and since deletions of shortcuts is free, updating shortcuts through $B_p$ is free as well. Finally, computing shortcuts through $w$ takes $\Oo(\log n)$ time as argued above.

\subsubsection{Splitting cluster nodes}
Above we have shown how to efficiently maintain the shortcutting system when a set of level $(i+1)$ clusters are merged into one cluster, $C(w)$. If a replacement level $i$-edge was found, no more topological changes happen to $\mathcal C_L$ so assume such an edge was not found. Then $w$ needs to be removed as a child of its parent cluster $p$ and added as a child of a new cluster node $p'$ which becomes the sibling of $p$ (Figure~\ref{fig:SplitMerge}(b) and (c)). The removal of $w$ decreases $n(p)$ which may result in some leaves of $T_l(p)$ becoming leaves of $T_h(p)$. Note that this happens for at most $\log^{\epsilon_h}n$ leaves so we can pay for updating all shortcuts through $L(p)$ using the same arguments as above. The only remaining shortcuts that need to be found are those through $L(p')$. As this tree contains only a single leaf, namely $w$, these shortcuts can be found in $\Oo(\log n)$ time. Finally, since $n(p)$ is decreased, it may need to be moved from $T_h(p'')$ to $T_l(p'')$ where $p''$ is its parent in $\mathcal C$. Updating shortcuts accordingly does not increase the overall time bound.

\subsubsection{Performance}
At initialization, we pay $\Oo(m + n\log n)$ for finding the MSF and $\Oo(\log^{\epsilon_h}n(2 + \log s_{\max})\log n + 2s_{\max}\log n + n)$ for the initial amount of credits. The latter is $\Oo(n)$ since $\epsilon_h$ is constant.
For each edge level increase, we pay $\Oo(\log^{3\epsilon_q}n + (\frac 1{\epsilon_h} + \frac 1{\epsilon_q})\log n/\log\log n)$ for searching for the edge and for making non-topological changes to $\mathcal C_L$. Hence an edge pays a total of $\Oo(\log^{1 + 3\epsilon_q}n + (\frac 1{\epsilon_h} + \frac 1{\epsilon_q})\log^2n/\log\log n)$ for this over all its level increases.
For each edge deletion, we pay $\Oo(\log^{1+3\epsilon_q + \epsilon_h}n)$ per level for topological changes to $\mathcal C_L$, giving a total cost of $\Oo(\log^{2+3\epsilon_q + \epsilon_h}n)$ over all levels.
It follows from these calculations that if $d$ edges are deleted, total cost is $\Oo(m(\log^{1 + 3\epsilon_q}n + (\frac 1{\epsilon_h} + \frac 1{\epsilon_q})\log^2n/\log\log n) + d\log^{2+3\epsilon_q + \epsilon_h}n)$. Picking constants $\epsilon_h < \frac 1 2$ and $\epsilon_q < \frac 1 6$, our main result follows from Corollary~\ref{Cor:Reduction}.
\begin{theorem}\label{Thm:Main}
There is a data structure for fully-dynamic minimum spanning tree which supports updates in $\Oo(\log^4n/\log\log n)$ amortized time, assuming the RAM model with standard instructions.
\end{theorem}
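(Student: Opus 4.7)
The plan is to assemble the pieces already developed in Section~\ref{sec:FastMSF} and then invoke Corollary~\ref{Cor:Reduction}. First I would restate the per-operation costs from the performance subsection: the initial construction costs $\Oo(m + n\log n)$ for Prim plus $\Oo(n)$ for the initial credits on heavy/buffer/bottom tree leaves; each of the (at most $\log n$) level increases of a given edge costs $\Oo(\log^{1+3\epsilon_q}n + (\tfrac{1}{\epsilon_h}+\tfrac{1}{\epsilon_q})\log^2 n/\log\log n)$ for searching and for the non-topological queue updates; and each edge deletion additionally costs $\Oo(\log^{2+3\epsilon_q+\epsilon_h}n)$ for the topological changes to $\mathcal C_L$ (merging the level-$(i+1)$ clusters, updating the shortcuts through $T_h(w)$, $T_l(w)$, and $L(p)$, and creating $L(p')$), paid for partly by the credit scheme and partly directly.

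Summing over $m$ starting edges and $d$ deletions then gives a total running time of
\[
  \Oo\!\left(m\log^{1+3\epsilon_q}\!n + m\Bigl(\tfrac{1}{\epsilon_h}+\tfrac{1}{\epsilon_q}\Bigr)\tfrac{\log^2 n}{\log\log n} + d\log^{2+3\epsilon_q+\epsilon_h}\!n\right).
\]
Next I would choose the free constants. Setting $\epsilon_h<\tfrac12$ and $\epsilon_q<\tfrac16$ makes $3\epsilon_q+\epsilon_h<1$, so we can write the total as $\Oo(t_c m + t_r d)$ with $t_c = \Oo(\log^2 n/\log\log n)$ (absorbing the $\log^{1+3\epsilon_q}n$ term into the dominant $\log^2 n/\log\log n$ term, valid since $3\epsilon_q<1$) and $t_r = \Oo(\log^{2+\epsilon}n)$ for some constant $\epsilon<1$ (namely $\epsilon = 3\epsilon_q+\epsilon_h$). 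Both $t_c$ and $t_r$ are manifestly non-decreasing in $n$, so the hypothesis of Lemma~\ref{Lem:Reduction} is satisfied.

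Finally I would plug these into Corollary~\ref{Cor:Reduction}, which is exactly calibrated for this regime: with $t_c=\Oo(\log^2 n/\log\log n)$ and $t_r=\Oo(\log^{2+\epsilon}n)$ the corollary yields amortized update time $\Oo(\log^4 n/\log\log n)$ for fully-dynamic MSF, giving Theorem~\ref{Thm:Main}. The main obstacle is really just a bookkeeping one: making sure the exponent $3\epsilon_q+\epsilon_h$ in the per-deletion term stays strictly below $1$ while $\epsilon_h,\epsilon_q$ remain large enough that the local-tree credit arguments from Section~\ref{sec:FastMSF} (in particular the $\Omega(\tfrac{1}{\epsilon_h}\log\log n)$ rank gap used to bound the number of type-4 queue nodes, and the $\log^{\beta}n$ credits needed per rank-node update) still go through; the choices $\epsilon_h<\tfrac12$, $\epsilon_q<\tfrac16$ simultaneously satisfy both requirements, so everything balances.
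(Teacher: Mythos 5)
Your proposal is correct and takes essentially the same route as the paper: it assembles the per-edge and per-deletion costs from the performance analysis in Section~\ref{sec:FastMSF}, sets $t_c = \Oo(\log^2 n/\log\log n)$ and $t_r = \Oo(\log^{2+\epsilon}n)$ with $\epsilon = 3\epsilon_q+\epsilon_h < 1$ by choosing $\epsilon_h < \tfrac12$ and $\epsilon_q < \tfrac16$, and invokes Corollary~\ref{Cor:Reduction}. (One small slip: you describe $\Oo(\log^{1+3\epsilon_q}n + \cdots)$ as the cost of \emph{each} level increase, when it is the total over the up-to-$\log n$ level increases of a single edge, but your summed total $\Oo(m\log^{1+3\epsilon_q}n + \cdots)$ is correct so this does not affect the argument.)
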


\section*{Acknowledgments}
We thank Mikkel Thorup for fruitful discussions.

\bibliographystyle{amsplain}
\bibliography{general}

\providecommand{\bysame}{\leavevmode\hbox to3em{\hrulefill}\thinspace}
\providecommand{\MR}{\relax\ifhmode\unskip\space\fi MR }
\providecommand{\MRhref}[2]{%
  \href{http://www.ams.org/mathscinet-getitem?mr=#1}{#2}
}
\providecommand{\href}[2]{#2}
\begin{thebibliography}{10}

\bibitem{andersson98sorting}
Arne Andersson, Torben Hagerup, Stefan Nilsson, and Rajeev Raman, \emph{Sorting
  in linear time?}, Journal of Computer and System Sciences \textbf{57} (1998),
  no.~1, 74--93, See also STOC'95.

\bibitem{eppstein97sparsification}
David Eppstein, Zvi Galil, Giuseppe~F. Italiano, and Amnon Nissenzweig,
  \emph{Sparsification - a technique for speeding up dynamic graph algorithms},
  J. ACM \textbf{44} (1997), no.~5, 669--696.

\bibitem{frederickson85connect}
Greg~N. Frederickson, \emph{Data structures for on-line updating of minimum
  spanning trees, with applications}, SIAM Journal on Computing \textbf{14}
  (1985), no.~4, 781--798, See also STOC'83.

\bibitem{han04detsort}
Yijie Han, \emph{Deterministic sorting in {$O(n\log\log n)$} time and linear
  space}, Journal of Algorithms \textbf{50} (2004), no.~1, 96--105, See also
  STOC'02.

\bibitem{Henzinger97fullydynamic}
Monika~Rauch Henzinger and Valerie King, \emph{Fully dynamic 2-edge
  connectivity algorithm in polylogarithmic time per operation}, 1997.

\bibitem{henzinger99connect}
\bysame, \emph{Randomized fully dynamic graph algorithms with polylogarithmic
  time per operation}, Journal of the ACM \textbf{46} (1999), no.~4, 502--516,
  See also STOC'95.

\bibitem{henzinger97sampling}
Monika~Rauch Henzinger and Mikkel Thorup, \emph{Sampling to provide or to
  bound: With applications to fully dynamic graph algorithms}, Random
  Structures and Algorithms \textbf{11} (1997), no.~4, 369--379, See also
  ICALP'96.

\bibitem{holm01connect}
Jacob Holm, Kristian de~Lichtenberg, and Mikkel Thorup, \emph{Poly-logarithmic
  deterministic fully-dynamic algorithms for connectivity, minimum spanning
  tree, 2-edge, and biconnectivity}, Journal of the ACM \textbf{48} (2001),
  no.~4, 723--760, See also STOC'98.

\bibitem{patrascu04connect}
Mihai P{\v a}tra{\c s}cu and Erik~D. Demaine, \emph{Lower bounds for dynamic
  connectivity}, Proc. 36th ACM Symposium on Theory of Computing (STOC), 2004,
  pp.~546--553.

\bibitem{Raman96fastalgorithms}
Rajeev Raman, \emph{Fast algorithms for shortest paths and sorting}, 1996.

\bibitem{thorup00connect}
Mikkel Thorup, \emph{Near-optimal fully-dynamic graph connectivity}, Proc. 32nd
  ACM Symposium on Theory of Computing (STOC), 2000, pp.~343--350.

\bibitem{thorup02sortAC0}
\bysame, \emph{Randomized sorting in {$O(n\log\log n)$} time and linear space
  using addition, shift, and bit-wise boolean operations}, Journal of
  Algorithms \textbf{42} (2002), no.~2, 205--230, See also SODA'97.

\bibitem{Wulff-Nilsen13a}
Christian Wulff-Nilsen, \emph{Faster deterministic fully-dynamic graph
  connectivity}, SODA, 2013, pp.~1757--1769.

\end{thebibliography}

\end{document}